\NewDocumentCommand{\ceil}{s O{} m}{%
  \IfBooleanTF{#1} 
    {\left\lceil#3\right\rceil} 
    {#2\lceil#3#2\rceil} 
}
\newtheorem{lemma}{Lemma}
\title{On the paper \lq\lq Quantum theory cannot consistently describe the use of itself''}
\author{Antonio Bernal%
  \thanks{Electronic address: \texttt{abernal@ub.edu}\\
Supported by project FIS2013-41757-P  }}
\affil{Department de Matemàtiques i Informàtica. Universitat de Barcelona}
\begin{document}
\maketitle

\begin{abstract}
In the paper \lq\lq Quantum theory cannot consistently describe the use of itself'' by D. Frauchiger and R. Renner an atempt is made at proving a \lq\lq no-go theorem'' that states that either quantum theory cannot be universally applied, even to macroscopic systems, or some very intuitive properties concerning recursive reasoning and uniquenes of physical values must be false.

In this paper, we give a concise description of the paper's result, and expose a detail in the proof.
\end{abstract}

\begin{multicols}{2}

\section*{Introduction}

This document deals on the paper \cite{frauchigerrenner}, by D. Frauchiger and R. Renner, on the limits of applicability of quantum mechanics. 

The paper strives to prove a \lq\lq no-go'' theorem that states that either quantum theory cannot be applied to macroscopic or \lq\lq reasoning'' systems or some other very intuitive assumptions (to be described later) must be false. 

To this end, the authors describe a \lq\lq Gedankenexperiment'' in which several agents interact with certain physical systems and apply quantum theory and ordinary reasoning, to obtain a contradictory result.

After the publication of \cite{frauchigerrenner}, there has been a number of papers stating several observations on the interpretation of the paper, some of them stating that the result is plainly wrong. See, for example,  \cite{paperadicional1,paperadicional2, paperadicional3,paperadicional4,paperadicional5,paperadicional6}. See also \cite{bub} for a paper aiming to clarify, and defend, the conclusions of \cite{frauchigerrenner}.

The aim of this paper is to describe the reasoning of \cite{frauchigerrenner} as sucinctly and tersely as possible, and to expose what the author thinks is a detail that renders the proof of the result of \cite{frauchigerrenner} incomplete.

The structure of this paper is as follows. In section \ref{descriptionofproblem}, the framework and methods  of \cite{frauchigerrenner} are explained concisely. In section \ref{theproof}, we give the details of the \lq\lq no-go'' theorem. In section \ref{theproblem}, we point out where a problem in the proof lies. In section \ref{basicobservation}, a little interpretation is given.

\section{The \lq\lq Gedankenexperiment'' and the \lq\lq no-go'' theorem}\label{descriptionofproblem}

To simplify the notation, the unitary operators that describe the evolution of the state of an isolated system from one instant of time to another will be implied. For example, suppose that the quantum system $X$ is isolated during a time interval $I$, and that it is in the state $|\Phi\rangle$ in time $t_1\in I$. Then, if $t_2\neq t_1$ is also in $I$, we will say that $X$ is also in the state $|\Phi\rangle$ in time $t_2$, instead of saying that it is in the state $U_{t_1\to t_2}|\Phi\rangle$, or some similar notation.

In the initial setting, we consider four agents named $\bar F$, $\bar W$, $F$ and $W$. The notation $W$ stands for Wigner and $F$ for Wigner's friend, refering to the well known Wigner's friend paradox \cite{wignersfriend}.

All four agents are aware of the protocol to be described and are able to apply quantum theory and make logical deductions.

Agent $\bar F$ is inside a lab $\bar L$, which consists on $\bar F$ itself, several experimental devices $\bar D$  and a random generator $R$ with two outcomes: \emph{heads}, with probability $1/3$ and \emph{tails}, with probabilty $2/3$. We can think about $R$ as a quantum system with a twodimensional state space, with a basis ${\cal B}=\{|\textit{heads}\rangle_R, |\textit{tiles}\rangle_R\}$, which is in an initial state
\begin{equation}\label{inistateofr}
	|\textit{init}\rangle_R=\frac{1}{\sqrt{3}}|\textit{heads}\rangle_R+ \sqrt{\frac{2}{3}}|\textit{tails}\rangle_R,
\end{equation}

Thus $\bar L=\bar F\otimes\bar D\otimes R$.

On the other hand, agent $F$, together with a set of devices $D$ form what we provisionally call lab $L_0$, $L_0=F\otimes D$. Agents $W$ and $\bar W$ are outside of $\bar L$ and $L_0$ and have no inner information on the outputs of measurements and preparations that take place inside the labs. They are aware of the protocol that is being followed.

Time is labelled in rounds $n$ and a continuous number $x$, as in $t=n\colon x$. 

In the beginning of round $n$, $t=n\colon 00$ a spin$-1/2$ particle $S$ enters in $\bar L$ and $\bar F$ trigers the random generator $R$. That is, $\bar F$ measures $R$ in the basis $\cal{B}$. Let $r$ be its output. If $r=\textit{heads}$, $\bar F$ prepares $S$ in the state $|\downarrow\rangle_S$ with the device $\bar D$, if $r=\textit{tails}$, $\bar F$ prepares $S$ in $|\rightarrow\rangle_S=(1/\sqrt{2})(|\uparrow\rangle_S + |\downarrow\rangle_S)$, time is now $t=n\colon 01$. Agent $\bar F$ doesn't publish neither the outcome of $R$ nor the state in which she has prepared the system $S$.

The laboratory $\bar L$ with $\bar F$, $\bar D$ and the random generator $R$ is let alone and isolated and $S$ enters into lab $L_0$, say in $t=n:05$. We call $|\bar h\rangle_{\bar L}$ and $|\bar t\rangle_{\bar L}$ the states of lab $\bar L$ when the preparation of $S$ is over and $S$ has exited $\bar L$, in the cases where $r=\textit{heads}$ and $r=\textit{tails}$ respectively.

Let's call $L=L_0\otimes S$ the second lab, which consists on agent $F$, the devices $D$ on it and the spin $S$. The system $L$ is isolated from $t=n:05$, when $S$ enters in $L_0$, to the moment when $W$ measures it, as we will later describe.

On time $t=n\colon 10$, agent $F$ measures $S$ on the basis $\{|\uparrow\rangle_S,|\downarrow\rangle_S\}$. Agent $F$ doesn't publish the result of this measurement. We call $|z=+1/2\rangle_L$ and $|z=-1/2\rangle_L$ the states of lab $L$ when the result of such a measure is respectively $|\uparrow\rangle_S$ and $|\downarrow\rangle_S$.

On time $t=n\colon 20$, agent $\bar W$ measures $\bar L$ on any basis containing the vector 
\[
|\bar{ok}\rangle_{\bar L} = \frac{1}{\sqrt{2}}(|\bar h\rangle_{\bar L} - |\bar t\rangle_{\bar L}).
\]
If the vector $|\bar{ok}\rangle_{\bar L}$ is selected in the measure, $\bar W$ declares publicy $\bar w=\bar{ok}$. Otherwise, he declares publicly $\bar w=\bar{fail}$.

On time $t=n\colon 30$, agent $W$ measures $L$ on any basis that contains the vector
\[
|ok\rangle_L = \frac{1}{\sqrt{2}}\left(|z=-\frac{1}{2}\rangle_L - |z=+\frac{1}{2}\rangle_L\right).
\]
If the result of the measure is $|ok\rangle_L$, $W$ declares publicly $w=ok$, otherwise, he declares publicy $w=fail$.

If $\bar w = \bar{ok}$ and $w=ok$, the experiment halts in time $t=n\colon 40$. Otherwise, the round $n$ is increased and the experiment starts again at time $t=n+1\colon 00$, resetting the devices and the random generator to their initial states, discarding the spin $S$ from $L$ and and entering another spin $S$ in $\bar L$.


The reasoning in \cite{frauchigerrenner} uses three assumptions Q, C and S. 

\emph{Assumption Q} assures that agents $F$, $\bar F$, $W$ and $\bar W$ can use quantum mechanics to analyze any physical system, even in the case where such system is macroscopic and contains devices or macroscopic agents that can make logical deductions and apply quantum mechanics themselsves to other systems. In short, assumption Q means that quantum theory can be applied to any system.

\emph{Assumtion C} means that if agent $A$ knows the conclusions of agent $A'$ and both agents use the same theory to get their conclusions, then A can adopt $A'$'s conclusions as his own. Assumption C is cognitive consistency.

\emph{Assumption S} means that the result of any single quantum measurement is unique. 

For completeness, we cite literally \cite{frauchigerrenner}:

\emph{Assumption Q. }Suppose agent $A$ has established that \lq\lq System $S$ is in state $|\psi\rangle_S$ at time $t_0$.''

Suppose furthermore that agent $A$ knows that \lq\lq the value $x$ is obtained by a measurement of $S$ with respect to the family $\{\pi^{t_0}_x\}_{x\in \chi}$ of Heisenberg operators relative to time $t_0$, wich is completed at time $t$.

If $\langle\psi|\pi^{t_0}_\xi|\psi\rangle=1$, for some $\xi\in\chi$, the agent $A$ can conclude that \lq\lq I am certain that $x=\xi$ at time $t$.''

\emph{Assumption C. }Suppose that agent $A$ has established that \lq\lq I am certain that agent $A'$, upon reasoning with the same theory as the one I am using, is certain that $x=\xi$ at time $t$.'' Then, agent $A$ can conclude \lq\lq I am certain that $x=\xi$ at time $t$.''

\emph{Assumption S. }Suppose agent $A$ has established that \lq\lq I am certain that $x=\xi$ at time $t$''. Then agent $A$ must necessarily deny that \lq\lq I am certain that $x\neq \xi$ at time $t$.''

\section{The proof}\label{theproof}

After having prepared the spin $S$, agent $\bar F$ sends it to $F$ without revealing to her the result of the random trial $r$ and the state she has prepared $S$ in. This happens in $t=n:05$.

To  agents $\bar W$ and $W$, (and to $F$ just an instant prior to the measurement of $S$ in $t=n\colon 10$) the system $\bar L\otimes S$ is in the state:

\begin{align}\label{statelbartensors}
	|\Psi_1\rangle_{\bar L\otimes S} &= \frac{1}{\sqrt{3}}|\bar h\rangle_{\bar L}|\downarrow\rangle_S + \sqrt{\frac{2}{3}} |\bar t\rangle_{\bar L}|\rightarrow\rangle_S\nonumber\\
	&= \frac{1}{\sqrt{3}}|\bar h\rangle_{\bar L}|\downarrow\rangle_S + \sqrt{\frac{2}{3}}\frac{1}{\sqrt{2}} |\bar t\rangle_{\bar L}\left(|\uparrow\rangle_S + |\downarrow\rangle_S\right)\nonumber\\
	&= \frac{1}{\sqrt{3}}\left(|\bar h\rangle_{\bar L}|\downarrow\rangle_S + |\bar t\rangle_{\bar L}|\uparrow\rangle_S + |\bar t\rangle_{\bar L} |\downarrow\rangle_S\right).
\end{align}

Just after $F$ made the measurement of $S$ with respect to the basis $\{|\uparrow\rangle_S,|\downarrow\rangle_S\}$, the system $\bar L\otimes L$ is, to $W$ and $\bar W$, in the state:

\begin{align}\label{statelbartensorl}
|\Psi_2\rangle_{\bar L\otimes L} &= \frac{1}{\sqrt{3}}(|\bar h\rangle_{\bar L}|z=-\frac{1}{2}\rangle_L + |\bar t\rangle_{\bar L}|z=+\frac{1}{2}\rangle_L \nonumber\\
&+ |\bar t\rangle_{\bar L} |z=-\frac{1}{2}\rangle_L).
\end{align}

It can be directly checked that 
\begin{equation}\label{probabilityoneovertwelve}
	|\phantom{}_{\bar L}\langle \bar{ok}|\phantom{}_L\langle ok|\Psi_2\rangle_{\bar L\otimes L}|^2=\frac{1}{12},
\end{equation}
so we can be certain that there will be a round $n^*$ in which the experiment will halt at $t=n^*\colon 40$.

However, it is also direct that
\begin{equation}\label{okbarminusonehalfimpossible}
|\phantom{}_{\bar L}\langle \bar{ok}|\phantom{}_L\langle z=-\frac{1}{2}|\Psi_2\rangle_{\bar L\otimes L}|^2=0.
\end{equation}

By (\ref{probabilityoneovertwelve}), all the agents are certain that there will come a round $n^*$, such that $\bar W$ will publish $\bar w=\bar{ok}$ at time $t=n^*\colon 20$ and $W$ will publish $w=ok$ at time $t=n^*\colon 30$.

So, we suppose that we are in such a round $n^*$. 

Due to (\ref{okbarminusonehalfimpossible}), since $\bar w=\bar{ok}$, it is impossible that $F$ measured $z=-1/2$ at time $t=n^*\colon 10$, so it measured $z=+1/2$.

Then, it is impossible that $\bar F$ prepared $S$ in time $t=n^*\colon 01$ in the state $|\downarrow\rangle_S$. So, it turns out that $\bar F$ prepared $S$ in the state $|\rightarrow\rangle_S$, and that was the state in which $F$ received $S$ in time $t=n^*:05$.

Since $W$ knew in $t=n^*\colon 20$ the publication $\bar w=\bar{ok}$ by $\bar W$, all the above reasoning is available to $W$ in $t=n^*\colon 20$, so $W$ knew that $S$ was in the state $|\rightarrow\rangle_S$ in time $t=n^*\colon 09$. So, to $W$, the state of $L$ in time $t=n^*\colon 10$, after $F$ measured $S$ in the basis $\{|\uparrow\rangle_S,|\downarrow\rangle_S\}$, was

\begin{equation}\label{stateoflafterwknowsaboutwbarr}
\frac{1}{\sqrt{2	}}\left(\left|z=+\frac{1}{2}\right\rangle_L + \left|z=-\frac{1}{2}\right\rangle_L\right).
\end{equation}

Now lets point out a detail:

\begin{lemma}\label{lemmaconstantstate}
	The state of $L$ to $W$ is also (\ref{stateoflafterwknowsaboutwbarr}) in time $t=n^*\colon 30$.
\end{lemma}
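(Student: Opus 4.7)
The plan is to argue that $W$'s state assignment to $L$ is preserved from $t=n^*\colon 10$ to $t=n^*\colon 30$ because (i) $L$ undergoes only isolated unitary evolution in this interval, which by the paper's notational convention is absorbed into the state label, and (ii) $W$ acquires no information pertinent to $L$ during this interval beyond what was already used to derive (\ref{stateoflafterwknowsaboutwbarr}).

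First I would verify the isolation claim. From $t=n^*\colon 10$ (just after $F$'s measurement of $S$) through $t=n^*\colon 30$ (just before $W$ measures), nothing interacts with lab $L$. The only externally visible event in this window is $\bar W$'s measurement of $\bar L$ at $t=n^*\colon 20$, which acts on a distinct tensor factor and therefore generates no dynamical change in $L$. Under the convention that unitary evolution of an isolated system is suppressed, the ket assigned to $L$ must be the same at both times. Next I would examine what information $W$ gains in the interval. The sole input to the derivation of (\ref{stateoflafterwknowsaboutwbarr}) was the public announcement $\bar w=\bar{ok}$, received at $t=n^*\colon 20$, combined with (\ref{okbarminusonehalfimpossible}) and assumptions Q and C to back out that $\bar F$ must have prepared $S$ in $|\rightarrow\rangle_S$. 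Between $t=n^*\colon 20$ and $t=n^*\colon 30$, $W$ acquires no new data (in particular, his own measurement has not yet been performed), so the very same chain of inferences can be rerun verbatim at $t=n^*\colon 30$ and will yield the same ket (\ref{stateoflafterwknowsaboutwbarr}) for $L$.

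The main obstacle, and the point I would dwell on, is the coexistence of this ``reasoning-based'' state assignment with the entanglement between $\bar L$ and $L$ exhibited in $|\Psi_2\rangle_{\bar L\otimes L}$. A naive alternative would be to project $|\Psi_2\rangle_{\bar L\otimes L}$ onto $|\bar{ok}\rangle_{\bar L}$ and read off the corresponding conditional ket on $L$; one checks directly that this gives $|z=+\tfrac{1}{2}\rangle_L$ up to a phase, which differs from (\ref{stateoflafterwknowsaboutwbarr}). I suspect this tension between the two update procedures is exactly the ``detail'' that the next section will expose. For the purposes of Lemma \ref{lemmaconstantstate}, however, it suffices to argue that once (\ref{stateoflafterwknowsaboutwbarr}) is accepted at $t=n^*\colon 10$ via the reasoning prescribed by the paper, no physical interaction with $L$ and no fresh information about $L$ reach $W$ before $t=n^*\colon 30$, so that the same assignment must be in force there.
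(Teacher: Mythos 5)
Your proposal is correct and rests on the same single idea as the paper's own (one-line) proof: $L$ is isolated from $t=n^*\colon 05$ until $W$'s measurement at $t=n^*\colon 30$, so under the paper's convention of suppressing unitary evolution the ket assigned at $t=n^*\colon 10$ carries over unchanged. Your added observation that conditioning $|\Psi_2\rangle_{\bar L\otimes L}$ on $|\bar{ok}\rangle_{\bar L}$ instead yields $|z=+\frac{1}{2}\rangle_L$ is exactly the tension the paper exploits in Section \ref{theproblem} to argue that Lemma \ref{lemmaconstantstate} is the flawed step, but it is not part of the lemma's proof itself.
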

\begin{proof}
Since $L$ is isolated from $t=n^*\colon 05$ onwards, until $W$ makes the measurement of $L$ in $t=n^*\colon 30$. 
\end{proof}

This state is orthogonal to $|ok\rangle_L$. So we know that it is impossible for $W$ to measure $w=ok$ in $t=n^*\colon 30$. A contradiction.

\section{The problem}\label{theproblem}

However, the state of $L$ for $W$ in time $t=n^*\colon 10$ is not (\ref{stateoflafterwknowsaboutwbarr}). 

Agent $\bar W$ has arrived to the conclusion that the state of $S$ to $F$ is $|\uparrow\rangle_S$ after the measure, so the state of $L$ was to him (\ref{stateoflafterwknowsaboutwbarr}) in $t=n^*\colon 09$, before the measurement, but it was
\begin{equation}\label{realstateofl}
	|z=+\frac{1}{2}\rangle_L
\end{equation}
in time $t=n^*\colon 10$, after the measurement.

Since $\bar W$ published the result of his measure $\bar w=\bar{ok}$ in $t=n^*\colon 20$, $W$ has the same knowledge and, for him, the state of $L$ in $t=n^*:10$ is not (\ref{stateoflafterwknowsaboutwbarr}), but (\ref{realstateofl}). And the state (\ref{realstateofl}) is not orthogonal to $|ok\rangle_L$, so $W$ can observe $w=ok$ at $t=n^*:30$. Thus, there are no contradictory conclusions. 

The key of the argument lies in Lemma \ref{lemmaconstantstate}. To $W$, the state of $L$ was (\ref{stateoflafterwknowsaboutwbarr}) in $t=n^*\colon 09$, $L$ was isolated from then up to $t=n^*\colon 30$. How is it possible that the state is (\ref{stateoflafterwknowsaboutwbarr}) in the begining and, without any interaction, it is (\ref{realstateofl}) in the end?

\section{A basic observation}\label{basicobservation}

Let's consider a simple example.

Suppose that Alice is preparing a qubit in a lab in either the state $|0\rangle$ or $|1\rangle$ with equal probabilities. Bob is waiting outside the lab in a room without any knowledge of Alice's preparation.

When she is finished, Alice places the qubit in a perfectly isolating container and handles it to Bob, without revealing the state in which the qubit has been prepared.

Bob is in his room with the container. To him, the state of the qubit is the maximally mixed state

\[
	|\Phi\rangle = \frac{1}{2}\mathbb{I}.
\]

Now suppose that Alice goes from her lab to Bob's room and tells Bob the state she has prepared the qubit in. Let's suppose that she lets Bob know that she prepared $|0\rangle$.

To Bob, the state of the qubit is now $|0\rangle$. So the qubit has been isolated from the Universe and without any internal dynamics, but by the change in Bob's knowledge, its state (to Bob) changed from $|\Phi\rangle$ to $|0\rangle$.

A way to interpret things could be that the perfect isolating container was not so isolating, after all. The qubit was entangled with Alice, and in talking with Alice, Bob made an indirect measurement of the qubit, so the evolution was not unitary. 

Another way of seeing it could be that the fact that a system, like the qubit, was isolated from external influences and had no internal dynamics, didn't make its state constant, since the state of a system depends, no only on the system, but also on the information that the observer has about it, and this information has changed.

In our case, the state of the system $L$ was (\ref{stateoflafterwknowsaboutwbarr}) in time $t=n^*:09$, but in $t=n^*:20$ $W$ gets some information from the register of $\bar W$, which is akin to an indirect measurement of $L$ by $W$. The evolution of $L$, therefore, ceases to be unitary and lemma \ref{lemmaconstantstate} does not apply.

\end{multicols}

\end{document}